\newtheoremstyle{IEEEtheorem}
  {}
  {}
  {}
  {10pt}
  {\itshape}
  {:}
  { }
  {\thmname{#1}\thmnumber{ #2}\thmnote{ (#3)}}
\DeclareMathOperator*{\argmin}{arg\,min}
\theoremstyle{IEEEtheorem}
\newtheorem{thm}{Theorem}
\newtheorem{ass}{Assumption}
\newtheorem{lem}{Lemma}
\newtheorem{rem}{Remark}
\newtheorem{dfn}{Definition}
\newtheorem{pro}{Proposition}
\newtheorem{prob}{Problem}
\newcommand{{\R}}{{\mathbb{R}}}
\begin{document}
%
\title{Fixed-Time Convergent Distributed Observer Design of Linear Systems: A Kernel-Based Approach}
%
%
%

\author{Pudong~Ge,~\IEEEmembership{Student Member,~IEEE,}
Peng~Li,~\IEEEmembership{Member,~IEEE,}
         Boli~Chen,~\IEEEmembership{Member,~IEEE,}
         and~Fei~Teng,~\IEEEmembership{Senior~Member,~IEEE}
\thanks {This work was supported by EPSRC under Grant EP/W028662/1 and by The Royal Society under Grant RGS/R1/211256. P. Li is also supported by the Guangdong Basic and Applied Basic Research Foundation (2021A1515110262, 2022A1515011274).
(\textit{Corresponding author: Fei Teng}).}
\thanks{P. Ge and F. Teng are with the Dept. of Electrical and Electronic Engineering at Imperial College London, UK {\tt\small (pudong.ge19@imperial.ac.uk, f.teng@imperial.ac.uk)}}
\thanks{P. Li is with the School of Mechanical Engineering and Automation at Harbin Institute of Technology, Shenzhen, China {\tt\small (lipeng2020@hit.edu.cn)}}%
\thanks{B. Chen is with the Dept. of Electronic and Electrical Engineering at University College London, UK {\tt\small (boli.chen@ucl.ac.uk)}}
}

\markboth{IEEE Transactions on Automatic Control, Accepted}%
{IEEE Transactions on Automatic Control, Accepted}
%

\maketitle

\begin{abstract}
The robust distributed state estimation for a class of continuous-time linear time-invariant systems is achieved by a novel kernel-based distributed observer, which, for the first time, ensures fixed-time convergence properties. The communication network between the agents is prescribed by a directed graph in which each node involves a fixed-time convergent estimator. The local observer estimates and broadcasts the observable states among neighbours so that the full state vector can be recovered at each node and the estimation error reaches zero after a predefined fixed time in the absence of perturbation. This represents a new distributed estimation framework that enables faster convergence speed and further reduced information exchange compared to a conventional Luenberger-like approach. The ubiquitous time-varying communication delay across the network is suitably compensated by a prediction scheme. Moreover, the robustness of the algorithm in the presence of bounded measurement and process noise is characterised. Numerical simulations and comparisons demonstrate the effectiveness of the observer and its advantages over the existing methods. 
\end{abstract}

\begin{IEEEkeywords}
Distributed observer, Volterra operator, fixed-time convergence, communication network.
\end{IEEEkeywords}

\section{Introduction}
\IEEEPARstart{L}{arge} scale systems are encountered more frequently in real-world applications, such as power networks, intelligent transportation systems and other cyber-physical systems. Such systems have an increasing demand for flexibility and scalability. The continuous growth of communication technology has enabled the development of decentralised and distributed solutions, which can perform collaborative tasks by using multi-agent communications. This has posed new challenges in control theory, including distributed consensus control, distribution estimation and so on \cite{liu2008controllability,Zhang:tcst19}. 

State estimation represents one of the most important problems in control. Motivated by previous developments in the centralised observer, this paper focuses on the distributed observer, where the outputs of a large scale system are measured by a sensor network and only a small portion of the system output is available at each sensor node. Therefore, the main challenge is that the state of the system is not fully observable at any sensor node. The goal is to design a distributed observer, such that the full state of the system can be collaboratively reconstructed by each agent using local measurement and proper neighbouring communication~\cite{zou2020moving,farina2010distributed,wang2020distributed,park2012necessary}.

A variety of distributed linear time-invariant (LTI) state estimation approaches have been reported in the literature under different formulations inherit from the centralised approaches, including the Kalman filter and Luenberger observer. A comprehensive overview of existing distributed observers can be found in  \cite{Knotek:tcst21}.  It is noteworthy that the design of a distributed observer is highly influenced by the communication graph. In \cite{olfati2007distributed}, a distributed Kalman filtering algorithm is proposed for an undirect and connected communication graph. With the same assumption on the communication graph, a distributed Luenberger-type observer is presented in \cite{kim2016distributed}.  More recently, research efforts are paid to more general directed graphs \cite{han2019simple,park2017design,Mitra:tac19,wang:tac18}. Necessary and sufficient observability conditions for designing a distributed observer are stated in \cite{park2017design,ugrinovskii2013conditions,khan2010connectivity}. {The study \cite{del2019distributed}, on top of the Luenberger observer based scheme, introduces a multi-hop staircase decomposition mechanism, which makes it possible to lower information exchange and to relax the common assumptions of strongly connected graphs compared to the majority of distributed observers in the literature.}
Most of the existing methods are based on the Luenberger observer, which permits a single-agent-based design and implementation and ensures the local state estimate of each agent asymptotically converges the system state. On the other hand, {\cite{silm2019note,silm2019design} propose alternative solutions to distributed state estimation by using the homogeneous technique \cite{Perruquetti:tac08}.} As such, the state estimation error decays within a small finite time. 

An important challenge in distributed estimation that has not been extensively addressed is the communication delay throughout the network. In the majority of existing works, the effect of delays is omitted, while its presence may drastically influence the estimation performance. Very recently, \cite{silm:tac20} proposes a time-delay distributed observer, which guarantees exponential stability in the presence of time-varying but conservatively known (upper bound is available) communication delays, and the convergence rate can be designed up to a maximum total delay. 

In this paper, we study the distributed observer problem of a continuous-time LTI system, where the communication between agents may involve time-varying delays, as assumed in \cite{silm:tac20}. The main contribution of the paper lies in a novel fixed-time convergent distributed observer based on a cross-agent information sharing mechanism. The method provides an example of how distributed estimation systems can benefit from fixed-time convergence properties. 
The key to the fixed-time observer is the Volterra integral operators with specialised kernel functions, as inspired by the centralised counterpart \cite{pin2013kernel}. {In contrast to the majority of methods in the literature that require the full-dimensional state estimates to be shared among neighbouring nodes, the proposed scheme enables a reduction of the transmitted data over the communication links by invoking a rank-condition and the effect of delays in communication networks is compensated.}
Finally, the robustness of the proposed method against measurement noise and perturbations is characterised. 

The outline of this paper is as follows. The state estimation problem formulation and mathematical preliminaries are given in \cref{sec:2}. \cref{sec:3} introduces the main algorithm, and its robustness against disturbances and measurement noise is analysed in \cref{sec:4}. Simulation examples are presented in \cref{sec:5}, and concluding remarks and future work are discussed in \cref{sec:6}.

\section{Problem Statement and Preliminary}
\label{sec:2}
\subsection{Problem Setting}
\textit{Notation:} Let $\R$, ${\R}_{\geq 0}$  and ${\R}_{> 0}$  denote the real, the non-negative real and the strict positive real sets of numbers, respectively. Given a vector ${\bm x} \in \R^n$, we denote $|{\bm{x}}|$ as the Euclidean norm of ${\bm x}$. Given a time-varying vector ${\bm x}(t) \in \R^n$, $t \in \R_{\geq 0}$, we will denote $\left\|{\bm x}\right\|_{\infty}$ as the quantity $\left\|{\bm x}\right\|_\infty = \sup{}_{t\geq0}|{\bm x}(t)|$. Assuming ${\bm x}(t)$ is $k$-th order differentiable, the $k$-th order derivative signal of $\bm{x}(t)$ is denoted by ${\bm x}^{(k)}(t)$. 

In this paper, a directed graph is denoted by $\mathcal{G}=\{\mathcal{N},\mathcal{E},\mathcal{A}\}$, where $\mathcal{N}=\{1,2,\cdots,N\}$ is a set of nodes, $\mathcal{E} \subseteq \mathcal{N} \times \mathcal{N}$ is a set of edges, and $\mathcal{A}=[a_{ij}]\in \mathbb{R}^{N \times N}$ denotes the adjacency matrix. The element $a_{ij}$ is the weight of the edge $(i,j)$, and $a_{ij} = 1$ if and only if $(i,j)\in\mathcal{E}$ and $a_{ij}=0$ otherwise. Specifically, $(i,j)\in\mathcal{E}$ means that the $i$-th node can send information to the $j$-th node. The set of neighbours of node $j$ is described by $\mathcal{N}_{j}=\{i:(i,j)\in \mathcal{E}\}$. A graph $\mathcal{G}$ is strongly connected if there exists a directed path between $\forall i,j\in\mathcal{N}, i\neq j$. Given a set $\{\bm{G}_1,\bm{G}_2,\cdots,\bm{G}_N\}$ of matrices with ${\bm G}_i \in \R^{m\times n}$, we use $\mathrm{col}(\bm{G}_{1},\bm{G}_{2},\cdots,\bm{G}_{N})$ to denote the stacked matrix $[\bm{G}_{1}^\top,\bm{G}_{2}^\top,\cdots,\bm{G}_{N}^\top]^\top\in \mathbb{R}^{N m\times n}$ and $\mathrm{diag}(\bm{G}_{1},\bm{G}_{2},\cdots,\bm{G}_{N})\in \mathbb{R}^{N m\times N n}$ to denote the block diagonal matrix with the $\bm{G}$’s along the diagonal. The following definitions will also be used $ \mathrm{col}_{i\in\mathcal{N}}(\bm{G}_{i})\triangleq \mathrm{col}(\bm{G}_{1},\bm{G}_{2},\cdots,\bm{G}_{N}) $ and $\mathrm{diag}_{i\in\mathcal{N}}(\bm{G}_{i})\triangleq \mathrm{diag}(\bm{G}_{1},\bm{G}_{2},\cdots,\bm{G}_{N})$. $|\mathcal{N}|$ defines the cardinality of the set. 
$\mathrm{obsv}(\cdot,\cdot)$ and $\mathrm{rank}(\cdot)$ are used to define 
the observability matrix of the given system and matrix rank, respectively. 

Consider the following continuous LTI system
\begin{align}
	\bm{\dot{x}} = \bm{Ax},\ \bm{y} = \bm{Cx}
	\label{eq:sys}
\end{align}
where $\bm{x}\in\mathbb{R}^{n}$ is the state and $\bm{y}\in\mathbb{R}^{m}$ is the output, $\bm{A}\in\mathbb{R}^{n\times n},\bm{C}\in\mathbb{R}^{m\times n}$. The system \eqref{eq:sys} is sensed by $N$ distributed agents 
$
{ \bm y}_i = \bm{C}_i{\bm x}
$
with $\bm{y}=\mathrm{col}(\bm{y}_{1},\bm{y}_{2},\cdots,\bm{y}_{N})$ where $\bm{y}_{i}\in\mathbb{R}^{m_{i}}$, $\sum_{i=1}^{N}m_{i}=m$ and $\bm{C}=\mathrm{col}(\bm{C}_{1},\bm{C}_{2},\cdots,\bm{C}_{N})$. For each node/subsystem $i\in\mathcal{N}$, $\bm{y}_{i}$ is the only output that is available for node $i$.
Neighbour relations between distinct pairs of agents are characterised by a directed graph $\mathcal{G}$. We assume throughout that $\bm{C}_{i} \neq 0, \forall i \in \mathcal{N}$ and $\bm{C}_{i}\neq \bm{C}_{j}, \,i\neq j, \,i,j \in \mathcal{N}$. For the sake of further analysis, let $\mathcal{O}\triangleq \mathrm{obsv}(\bm{A},\bm{C})$ and $\mathcal{O}_i \triangleq \mathrm{obsv}(\bm{A},\bm{C}_i)$ be the observability matrices of the pair $(\bm{A},\bm{C})$ and $(\bm{A},\bm{C}_i)$, respectively. The date transmission between agents may be impact by time-varying delays.

\begin{ass}\label{ass:observability}
    The pair $(\bm{A},\bm{C})$ is observable, but the pair $(\bm{A},\bm{C}_{i})$ is not fully observable.
\end{ass}


The problem investigated in this article is defined as follows.
\begin{prob}\label{prob:1}
    Given the system  \eqref{eq:sys} subject to a communication topology $\mathcal{G}$, how to design a distributed observer with the estimated state $\hat{\bm x}_i,\,\forall i \in \mathcal{N}$, such that the estimation error goes to 0 within a fixed time, 
    \begin{equation}\label{rq:mainproblem}
        |\hat{\bm x}_i(t) - \bm{x}(t)|=0,\, \forall t \geq \overline{\tau}
    \end{equation}
    where $\overline{\tau} \in \R_{>0}$ is a known finite time.
\end{prob}

\subsection{The Volterra Operator and BF-NK}\label{subsec:volterra}
Volterra operator and non-asymptotic kernel functions are the key tools to the observer design in the paper. To introduce later a distributed fixed-time observer,  here we briefly recall the basic concepts~\cite{pin2013kernel}. 

Given a function belongs to the Hilbert space of locally integrable function with domain $\mathbb{R}_{\geq0}$ and range $\mathbb{R}$, i.e.,  $w\in\mathcal{L}_{loc}^{2}(\mathbb{R}_{\geq0})$, its image by the \textit{Volterra operator} $V_{K}$ induced by a Hilbert-Schmidt ($\mathcal{HS}$) Kernel Function $K: \mathbb{R}\times\mathbb{R}\rightarrow\mathbb{R}$ is denoted by $\left[V_{K}w\right]$ of the form
\begin{align*}
	\left[V_{K}w\right](t)\triangleq\int_{0}^{t}K(t,\tau)w(\tau)d\tau,\quad t\in\mathbb{R}_{\geq0}
\end{align*}

\begin{dfn}[BF-NK]
	\cite{pin2013kernel} If a kernel $K\in\mathcal{HS}$ which is at least $(i-1)$th order differentiable with respect to the second argument, verifies the conditions
$
		 K^{(j)}(t,0)=0, \, \forall t\in \mathbb{R}_{\geq 0}
		 $ and $
		 K^{(j)}(t,t)\ne 0, \, \forall t\ne 0
$
	for all $j\in\{0,1,\cdots,i-1\}$, it is called an $i$th order \textit{Bivariate Feedthrough Non-asymptotic Kernel}~(BF-NK).
\end{dfn}

\begin{lem}
	\cite{pin2016non} For a given $i\geq0$, consider a signal defined as a function of time $w\in\mathcal{L}^{2}(\mathbb{R}_{\geq0})$ that admits the $i$th derivative in $\mathbb{R}_{\geq0}$ and a kernel function $K\in\mathcal{HS}$, having the $i$th derivative with respect to the second argument, denoted as $K$. After successive integral by parts, it holds that
	\begin{multline}
			\left[V_{K}w^{(i)}\right]\! (t)=\sum_{j=0}^{i-1}(-1)^{i-j-1}w^{(j)}(t)K^{(i-j-1)}(t,t)  \\
			+\sum_{j=0}^{i-1}(-1)^{i-j}w^{(j)}(0)K^{(i-j-1)}(t,0) 
			+(-1)^{i}\left[V_{K^{(i)}}w\right]\!(t)
		\label{eq:lem_volterra}
	\end{multline}
	which shows the function $\left[V_{K}w^{(i)}\right](t)$ is non-anticipative with respect to the lower-order derivatives $w,w^{(1)},\cdots,w^{(i-1)}$.
\end{lem}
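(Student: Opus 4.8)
The plan is to establish the identity by induction on the derivative order $i$, which formalises the ``successive integration by parts'' indicated in the statement. For the base case I would treat $i=0$, where \eqref{eq:lem_volterra} reduces to the tautology $[V_K w](t)=(-1)^0[V_{K^{(0)}}w](t)$ with both sums empty; it is also worth recording the case $i=1$ explicitly, since a single integration by parts of $[V_K w^{(1)}](t)=\int_0^t K(t,\tau)w^{(1)}(\tau)\,d\tau$ produces the boundary terms $K(t,t)w(t)-K(t,0)w(0)$ and $-[V_{K^{(1)}}w](t)$, matching \eqref{eq:lem_volterra} with the two sums collapsing to their $j=0$ summand. This $i=1$ computation is the engine of the whole argument.

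For the inductive step I would assume the claim for order $i-1$ and \emph{every} admissible kernel, and then write $w^{(i)}=(w^{(i-1)})^{(1)}$. One integration by parts, exactly as in the base case, gives
\begin{multline*}
[V_K w^{(i)}](t)= w^{(i-1)}(t)K(t,t)-w^{(i-1)}(0)K(t,0)\\-[V_{K^{(1)}}w^{(i-1)}](t).
\end{multline*}
The decisive step is to apply the induction hypothesis to the surviving Volterra term, now with the \emph{shifted kernel} $K^{(1)}$ in the role of $K$ and order $i-1$ in the role of $i$. The derivatives $(K^{(1)})^{(k)}$ that appear can be rewritten through the elementary relabelling $(K^{(1)})^{(k)}=K^{(k+1)}$, so that every kernel derivative is expressed directly in terms of $K$, as required by the right-hand side of \eqref{eq:lem_volterra}.

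What remains is purely algebraic: substituting the expanded $[V_{K^{(1)}}w^{(i-1)}](t)$ and collecting terms. I would check that the isolated contributions $w^{(i-1)}(t)K(t,t)$ and $-w^{(i-1)}(0)K(t,0)$ are precisely the $j=i-1$ summands missing from the two sums in \eqref{eq:lem_volterra}, that the induction-hypothesis sums over $0\le j\le i-2$ supply the remaining summands after the sign shifts $(-1)\cdot(-1)^{i-j-2}=(-1)^{i-j-1}$ at $\tau=t$ and $(-1)\cdot(-1)^{i-1-j}=(-1)^{i-j}$ at $\tau=0$, and that the leftover $-(-1)^{i-1}[V_{K^{(i)}}w](t)=(-1)^i[V_{K^{(i)}}w](t)$ closes the recursion. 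I expect the only genuine obstacle to be this bookkeeping of index ranges and alternating signs: one must confirm that the exponent shifts induced by the single integration by parts and by the relabelling $(K^{(1)})^{(k)}=K^{(k+1)}$ combine to reproduce exactly the kernel exponents $i-j-1$ and the signs $(-1)^{i-j-1}$, $(-1)^{i-j}$ of the statement, with no off-by-one error at the endpoints $j=0$ and $j=i-1$.
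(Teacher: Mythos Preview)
The paper does not supply its own proof of this lemma; it is quoted from \cite{pin2016non} and used as a tool. Your inductive argument via a single integration by parts at each step is exactly the natural formalisation of the phrase ``after successive integral by parts'' in the statement, and the bookkeeping you outline (the $j=i-1$ boundary terms completing the sums, the sign shifts $-(-1)^{i-j-2}=(-1)^{i-j-1}$ and $-(-1)^{i-1-j}=(-1)^{i-j}$, and the relabelling $(K^{(1)})^{(k)}=K^{(k+1)}$) is correct, so there is nothing to compare against and nothing missing.
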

Owing to the definition of the BF-NK, induced by a BF-NK $K_{h}$, the Volterra image \eqref{eq:lem_volterra} reduces to
\begin{align}
	\begin{aligned}
		\left[V_{K}w^{(i)}\right] (t)&=\sum_{j=0}^{i-1}(-1)^{i-j-1}w^{(j)}(t)K^{(i-j-1)}(t,t) \\
		&+(-1)^{i}\left[V_{K^{(i)}}w\right](t)
	\end{aligned}
	\label{eq:BF-NK_volterra}
\end{align}
A typical class of $\delta$th order BF-NKs that we will use in this paper have the form of
\begin{align*}
	K_{h}(t,\tau)=e^{-\omega_{h}(t-\tau)}\left(1-e^{-\overline{\omega}\tau}\right)^{\delta}
\end{align*}
which is parameterised by $\omega_{h}\in\mathbb{R}_{>0}$ and $\overline{\omega}\in\mathbb{R}_{>0}$. As it can be seen, all the non-asymptoticity conditions up to the $\delta$-th order are met thanks to the factor $\left(1-e^{-\bar{\omega} \tau}\right)^{\delta}$ regardless of the choice of $\omega_{h}$ and $\bar{\omega}$. 

The Volterra image signal $\left[\mathcal{V}_{K_h^{(i)}} w\right](t), \forall\,i \in\{1,2,\cdots,\delta\}$ can be obtained as the output of a linear time-varying scalar system. Letting $\xi(t) = \left[\mathcal{V}_{K_h^{(i)}} w\right](t)$, we have that 
\begin{equation}
\begin{array}{lll}
\dot{\xi}(t)\!\!\!\!&=&\!\!\!\! K_h^{(i)}(t,t)w(t)+\displaystyle \int_0^t\left( \frac{\partial}{\partial t}K_h^{(i)}(t,\tau) \right)w(\tau)d\tau\\
\!\!\!\!&=&\!\!\!\! -\omega_h \xi(t)+K_h^{(i)}(t,t)w(t) 
\end{array}
\label{eq:xiltv2}
\end{equation}
with $\xi(0)=0$.  Being $K_h^{(i)}(t,t)$ bounded and $\omega$ strictly positive, it holds that the scalar dynamical system realization of the Volterra operators induced by the proposed kernels is BIBO stable with respect to $w$. 

\section{Fixed-Time Convergent Distributed Observer}
\label{sec:3}
In this section, the solution method to Problem~\ref{prob:1} is presented. In the first instance, data transmission and communication delays within the sensor network are omitted. Under such a condition, a new distributed observer framework is designed. Then, the algorithm is modified to accommodate various network delays.

\subsection{Delay-Free Case}
\label{subsec:withoutdelay}
From \cref{ass:observability}, the state vector $\bm{x}$ is not fully observable from a single sensor node. Nevertheless, by resorting to the {commonly used observability decomposition technique of each subsystem \cite{kim2016distributed,han2019simple,silm2019design}}, it is possible to partially estimate $\bm{x}$. 
Let $n_{i}$ denotes the rank of the observability matrix of $(\bm{A},\bm{C}_{i})$, that is $n_{i}\triangleq \mathrm{rank}(\mathcal{O}_i)<n$.
There exists an orthogonal matrix $\bm{T}_i \in \mathbb{R}^{n \times n}$ that enables the state transformation, $\bm{\bar{x}}_{i}=\bm{T}_i\bm{x}$, and $\bm{\bar{x}}_{i}$ admits the following decomposition $\bm{\bar{x}}_{i}=
\begin{bmatrix}
\bm{\bar{x}}_{io}\quad\bm{\bar{x}}_{iu}
\end{bmatrix}^\top=
\begin{bmatrix}
\bm{T}_{io}\quad\bm{T}_{iu}
\end{bmatrix}^\top\bm{x}
$,
where $\bm{\bar{x}}_{io}$ represents the observable part and $\bm{\bar{x}}_{iu}$ stands for the unobservable part. The dynamics of $\bm{\bar{x}}_{i}$ follows
$
			 \bm{\dot{\bar{x}}}_{i}=\bm{\bar{A}}_i\bm{\bar{x}}_{i},\, 
		 \bm{y}_{i}=\bm{\bar{C}}_{i}\bm{\bar{x}}_{i},
$
where
\begin{align}
	\begin{aligned}
		\bm{\bar{A}}_i\!=\!\bm{T}_i\bm{A}\bm{T}_i^\top\!=\!\begin{bmatrix}
			\bm{A}_{io} & \bm{0}\\
			\bm{A}_{ir} & \bm{A}_{iu}
		\end{bmatrix},\,
		\bm{\bar{C}}_{i}\!=\!\bm{C}_{i}\bm{T}_i^\top\!=\!\begin{bmatrix}
			\bm{C}_{io} & \bm{0}
		\end{bmatrix}
	\end{aligned}\label{eq:decom_matrix}
\end{align}
with $\bm{A}_{io}\in\mathbb{R}^{n_{i}\times n_{i}}$, $\bm{A}_{iu}\in\mathbb{R}^{(n-n_{i})\times (n-n_{i})}$, $\bm{A}_{ir}\in\mathbb{R}^{(n-n_{i})\times n_{i}}$, $\bm{C}_{io}\in\mathbb{R}^{m_{i}\times n_{i}}$, $,\ \bm{T}_{io}\in\mathbb{R}^{n_{i}\times n},\ \bm{T}_{iu}\in\mathbb{R}^{(n-n_{i})\times n}$. Furthermore, $(\bm{A}_{io},\bm{C}_{io})$ is observable, 
and the dynamics of the observer part is governed by 
\begin{align}
    \bm{\dot{\bar{x}}}_{io} = \bm{A}_{io}\bm{\bar{x}}_{io},\ \bm{y}_{i} = \bm{C}_{io}\bm{\bar{x}}_{io}\,.
    \label{eq:obs_dynamic}
\end{align}
Next, a finite and fixed time convergent observer~\cite{pin2013kernel} is applied to estimate the observable part $\bm{\bar{x}}_{io}\in\mathbb{R}^{n_i}$, which will then be used to recover the full state vector through communication.

Thanks to the observability of $(\bm{A}_{io},\bm{C}_{io})$, there exists a linear coordinates transformation $\bm{z}_{i}=\bm{T}_{iz}\bm{\bar{x}}_{io}$ with $\bm{T}_{iz}\in\mathbb{R}^{n_{i}\times n_{i}}$ such that the system \eqref{eq:obs_dynamic} can be rewritten in the observer canonical form with respect to $\bm{z}_{i}$
\begin{align}
	\bm{\dot z}_{i}=\bm{A}_{i,z}\bm{z}_{i},\ y_{i}=\bm{C}_{i,z}\bm{z}_{i}
	\label{eq:obs_canonical}
\end{align}
where 
\begin{align*}
    \bm{A}_{i,z}&=\bm{T}_{iz}\bm{A}_{io}\bm{T}_{iz}^{-1}=\begin{bmatrix}
		a_{n_{i}-1} & 1 & \cdots & 0 \\
		\vdots & \vdots & \ddots  & \vdots \\
		a_{1} & 0 & \cdots  & 1 \\
		a_{0} & 0 & \cdots  & 0
	\end{bmatrix}
\end{align*}
For simplicity, we herein assume \eqref{eq:obs_dynamic} to be a single-output system, thereby $\bm{C}_{i,z}=\begin{bmatrix}
	1 & 0 & \cdots &0
\end{bmatrix}$. {However, the method is not limited to single-output systems as the state vector of a multi-output observable system can be estimated by multiple observers individually designed for each single output utilising, for example, the technique described in Lemma 9.4.4 of \cite{willems1998introduction}.}
The canonical-form subsystem \eqref{eq:obs_canonical} admits the following input-output realization
\begin{align}
	y_{i}^{(n_{i})}={\sum_{p=0}^{n_{i}-1}}{a_{p}y_{i}^{(p)}}
	\label{eq:io_dynamic}
\end{align}
Let us consider the Volterra integral operator induced by $K_i=e^{-\omega_{i,h}(t-\tau)}\left(1-e^{-\bar{\omega}_i\tau}\right)^{n_i}$, an $n_{i}$-th order BF-NK. Applying the Volterra integral operator introduced in \cref{subsec:volterra} and recalling \eqref{eq:BF-NK_volterra} for \eqref{eq:io_dynamic}, we obtain
\begin{align}
		&\sum_{p=0}^{n_{i}-1}\!(-1)^{n_{i}-p-1}y_{i}^{(p)}K_{i}^{(n_{i}-p-1)}(t,t) 
		\!+\!(-1)^{n_{i}}\!\left[V_{K_{i}^{(n_{i})}}y_{i}\right]\!(t)=
		\nonumber \\
		&\sum_{q=0}^{n_{i}-1}a_{q}\Bigg(\!(-1)^{q}\!\left[V_{K_{i}^{(q)}}y_{i}\right]\!(t)
		\!+\!\sum_{p=0}^{q-1}(-1)^{p+q-1}y_{i}^{p}K_{i}^{q-p-1}(t,t)\!\Bigg)
	\label{eq:volterra_eq1}
\end{align}
Then, for all $r\in\{0,\cdots,n_{i}-1\}$, the $r$-th state variable of \eqref{eq:obs_canonical} has the form of $z_{i,r}=y_{i}^{(r)}-\sum_{p=0}^{r-1}a_{n_{i}-r+p}y_i^{(p)}$, in terms of which
we rearrange \eqref{eq:volterra_eq1} after cumbersome algebra
\begin{align}
	\lambda_{i}=\bm{\gamma}_{i}\bm{z}_{i}
	\label{eq:obs_eq1}
\end{align}
\begin{align*}
	&\lambda_{i}\triangleq(-1)^{n_{i}-1}\left[V_{K_{i}^{(n_{i})}}y_{i}\right]+\sum_{p=0}^{n_{i}-1}a_{p}(-1)^{p}\left[V_{K_{i}^{(p)}}y_{i}\right]\\
	&\bm{\gamma}_{i}\triangleq\left[(-1)^{n_{i}-1}K_{i}^{(n_{i}-1)}(t,t),\cdots,K_{i}(t,t)\right]
\end{align*}
\cref{eq:obs_eq1} cannot be solved directly due to the rank-deficiency. However, by using $n_{i}$ BF-NKs $K_{i,h}(t,\tau)$ with different $\omega_{i,h},\,h\in\{0,\cdots,n_{i}-1\}$ but identical $\bar{\omega}_i$, it is possible to augment \eqref{eq:obs_eq1} into a matrix form
\begin{align}
	\bm{\Lambda}_i=\bm{\Gamma}_i\bm{z}_{i}
	\label{eq:obs_eq2}
\end{align}
where $\bm{\Lambda}_i=\left[\lambda_{i,0},\lambda_{i,1},\cdots,\lambda_{i,n_{i}-1}\right]^\top$ and $\bm{\Gamma}_i=\left[\bm{\gamma}_{i,0}^\top,\bm{\gamma}_{i,1}^\top,\cdots,\bm{\gamma}_{i,n_{i}-1}^\top\right]^\top$. In addition, all transformed signals $\left[V_{K_{i,h}^{(p)}}y_{i}\right],\forall h\in\{0,\cdots,n_{i}-1\},\forall p\in\{0,\cdots,n_{i}\}$ can be computed by \eqref{eq:xiltv2}. 
The invertibility of $\bm{\Gamma}$ is guaranteed for all $t>0$ thanks to the properties of the BF-NK~\cite{pin2019robust,li2020kernel}. Therefore, the observable state vector is estimated by \eqref{eq:obs_canonical}:
\begin{align}\label{eq:obs_eq20}
	\bm{\hat{z}}_{i}=\bm{\Gamma}_i^{-1}\bm{\Lambda}_i,\forall t \geq t_\delta
\end{align}
where $t_\delta$ is a small time instant, provided that $\bm{\Gamma}_i$ is invertible for any $t\geq t_\delta$, so as to circumvent the singularity $\bm{\Gamma}_i(0)=0$ as $K_{i,h}^{(h-1)}(0,0)=0$. Note that, in the proposed distributed observer, $t_\delta$ is set uniformly across all agents whereas the kernel functions for each agent are independently designed.

From the coordinate transformations $\bm{T}_{i\alpha}\triangleq \bm{T}_{iz}\bm{T}_{io} \in\mathbb{R}^{n_{i}\times n}$, which is known, it is immediate to obtain:
\begin{align}\label{eq:zi2x}
    \bm{z}_{i}=\bm{T}_{i\alpha}\bm{x},\,\,\forall i \in \mathcal{N}
\end{align}
As $\bm{T}_{i\alpha}$ is not invertible, the global state vector $\bm{x}$ can not be estimated from local state estimate $\bm{\hat z}$ via \eqref{eq:zi2x} without further information.
To establish the intercommunication requirements, the following results are shown.
\begin{lem}[Sylvester inequality]
	\cite{horn2012matrix} Let $\bm{G}\in\mathbb{R}^{m\times n}$ and $\bm{H}\in\mathbb{R}^{n\times p}$, it holds that 
	$
	\mathrm{rank}(\bm{G})+\mathrm{rank}(\bm{H})-n\leq\mathrm{rank}(\bm{GH})\leq\min\{\mathrm{rank}(\bm{G}),\mathrm{rank}(\bm{H})\}.
    $
	\label{lem:sylvester_inequ}
\end{lem}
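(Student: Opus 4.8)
The plan is to prove the two inequalities separately, starting with the elementary upper bound and then establishing the lower bound via a rank--nullity argument, which I expect to be the crux. Throughout I would read $\bm{H}:\mathbb{R}^{p}\to\mathbb{R}^{n}$ and $\bm{G}:\mathbb{R}^{n}\to\mathbb{R}^{m}$ as linear maps and translate every rank into a dimension of an image subspace, so that the whole statement becomes a pair of dimension counts.

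For the upper bound $\mathrm{rank}(\bm{GH})\leq\min\{\mathrm{rank}(\bm{G}),\mathrm{rank}(\bm{H})\}$ I would argue through containments of subspaces and take no delicate estimate. Each column of $\bm{GH}$ is a linear combination of the columns of $\bm{G}$, so the column space of $\bm{GH}$ is contained in that of $\bm{G}$, giving $\mathrm{rank}(\bm{GH})\leq\mathrm{rank}(\bm{G})$. Dually, each row of $\bm{GH}$ lies in the row space of $\bm{H}$, whence $\mathrm{rank}(\bm{GH})\leq\mathrm{rank}(\bm{H})$. Taking the minimum of the two yields the claim.

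For the lower bound I would study the restriction of $\bm{G}$ to the range $\mathcal{R}(\bm{H})$, whose dimension is $\mathrm{rank}(\bm{H})$. Since $\mathrm{rank}(\bm{GH})=\dim\bm{G}(\mathcal{R}(\bm{H}))$, applying the rank--nullity theorem to $\bm{G}$ restricted to $\mathcal{R}(\bm{H})$ gives
\[
\mathrm{rank}(\bm{GH})=\mathrm{rank}(\bm{H})-\dim\bigl(\mathcal{R}(\bm{H})\cap\ker\bm{G}\bigr).
\]
The key step is then to bound the overlap: because $\mathcal{R}(\bm{H})\cap\ker\bm{G}\subseteq\ker\bm{G}$ and $\dim\ker\bm{G}=n-\mathrm{rank}(\bm{G})$, one has $\dim(\mathcal{R}(\bm{H})\cap\ker\bm{G})\leq n-\mathrm{rank}(\bm{G})$. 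Substituting this produces $\mathrm{rank}(\bm{GH})\geq\mathrm{rank}(\bm{H})-(n-\mathrm{rank}(\bm{G}))=\mathrm{rank}(\bm{G})+\mathrm{rank}(\bm{H})-n$, as required.

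The main obstacle is the lower bound, and within it the only genuinely non-trivial point is correctly accounting for the intersection $\mathcal{R}(\bm{H})\cap\ker\bm{G}$: the inequality is tight precisely when this intersection fills as much of $\ker\bm{G}$ as possible, so the argument must pass through the containment into $\ker\bm{G}$ rather than a looser bound. Everything else --- the two column/row-space containments and the application of rank--nullity --- is routine and can be dispatched in a line or two.
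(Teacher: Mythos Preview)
Your proof is correct: the column/row-space containments give the upper bound, and the rank--nullity argument on $\bm{G}\big|_{\mathcal{R}(\bm{H})}$ together with $\mathcal{R}(\bm{H})\cap\ker\bm{G}\subseteq\ker\bm{G}$ gives the lower bound. Note, however, that the paper does not supply its own proof of this lemma; it is quoted as a standard result from \cite{horn2012matrix} and used as a black box in the proof of Proposition~\ref{pro:1}, so there is no in-paper argument to compare against.
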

\begin{pro}
	\label{pro:1}
  For any subset $\mathcal{N}_{s}\subset\mathcal{N}$, the observable matrix $\mathrm{obsv}\left(\bm{A},\mathrm{col}_{i\in\mathcal{N}_{s}}(\bm{C}_{i})\right)$ determined by $\left(\bm{A},\mathrm{col}_{i\in\mathcal{N}_{s}}(\bm{C}_{i})\right)$ satisfies the following condition: 
  \begin{align*}
      \mathrm{rank}\big(\mathrm{obsv}\left(\bm{A},\mathrm{col}_{i\in\mathcal{N}_{s}}(\bm{C}_{i})\right)\big)=\mathrm{rank}\big(\mathrm{col}_{i\in\mathcal{N}_{s}}(\bm{T}_{i\alpha})\big).
  \end{align*}
\end{pro}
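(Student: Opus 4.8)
The plan is to recast the claimed rank identity as an equality of subspaces. For any real matrix $\bm{M}$ acting on $\R^{n}$, let $\mathrm{row}(\bm{M})$ denote its row space, so that $\mathrm{rank}(\bm{M})=\dim\mathrm{row}(\bm{M})=\dim(\ker\bm{M})^{\perp}$. I would then prove the stronger statement that $\mathrm{col}_{i\in\mathcal{N}_{s}}(\bm{T}_{i\alpha})$ and $\mathrm{obsv}(\bm{A},\mathrm{col}_{i\in\mathcal{N}_{s}}(\bm{C}_{i}))$ have \emph{identical} row spaces; the rank equality then follows immediately by taking dimensions.

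First I would handle a single node $i$. Write $\mathcal{U}_{i}\triangleq\ker(\mathcal{O}_{i})=\bigcap_{k=0}^{n-1}\ker(\bm{C}_{i}\bm{A}^{k})$ for the unobservable subspace of $(\bm{A},\bm{C}_{i})$. The key structural fact is that the orthogonal decomposition \eqref{eq:decom_matrix} is constructed so that the rows of $\bm{T}_{iu}$ span $\mathcal{U}_{i}$ while the rows of $\bm{T}_{io}$ span its orthogonal complement; equivalently, $\ker(\bm{T}_{io})=\mathcal{U}_{i}=\ker(\mathcal{O}_{i})$. Since $\bm{T}_{io}$ and $\mathcal{O}_{i}$ share the same null space in $\R^{n}$, their row spaces coincide, both equal to $\mathcal{U}_{i}^{\perp}$ and of dimension $n_{i}$. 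Because $\bm{T}_{iz}\in\R^{n_{i}\times n_{i}}$ is invertible, left-multiplication preserves the row space, so $\mathrm{row}(\bm{T}_{i\alpha})=\mathrm{row}(\bm{T}_{iz}\bm{T}_{io})=\mathrm{row}(\bm{T}_{io})=\mathrm{row}(\mathcal{O}_{i})$; the accompanying bookkeeping $\mathrm{rank}(\bm{T}_{i\alpha})=n_{i}$ is exactly the tight case of \cref{lem:sylvester_inequ} with $\bm{G}=\bm{T}_{iz}$ and $\bm{H}=\bm{T}_{io}$.

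Then I would assemble the nodes of $\mathcal{N}_{s}$. The row space of a stacked matrix equals the sum of the row spaces of its blocks, hence $\mathrm{row}\big(\mathrm{col}_{i\in\mathcal{N}_{s}}(\bm{T}_{i\alpha})\big)=\sum_{i\in\mathcal{N}_{s}}\mathrm{row}(\bm{T}_{i\alpha})$. On the other side, the rows of $\mathrm{obsv}(\bm{A},\mathrm{col}_{i\in\mathcal{N}_{s}}(\bm{C}_{i}))$ are precisely the blocks $\bm{C}_{i}\bm{A}^{k}$ for $i\in\mathcal{N}_{s}$ and $k=0,\dots,n-1$, so its row space is $\sum_{i\in\mathcal{N}_{s}}\mathrm{row}(\mathcal{O}_{i})$ (equivalently its null space is $\bigcap_{i\in\mathcal{N}_{s}}\mathcal{U}_{i}$). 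Substituting the single-node identity $\mathrm{row}(\bm{T}_{i\alpha})=\mathrm{row}(\mathcal{O}_{i})$ term by term makes the two sums coincide, and taking dimensions delivers the claimed equality of ranks for every subset $\mathcal{N}_{s}\subset\mathcal{N}$.

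I expect the single-node kernel identity $\ker(\bm{T}_{io})=\ker(\mathcal{O}_{i})$ to be the main obstacle, as it is the only step that invokes the specific geometry of the observability decomposition rather than generic linear algebra: one must argue that the transformed pair $(\bm{A}_{io},\bm{C}_{io})$ being observable forces the rows of $\bm{T}_{io}$ to span \emph{exactly} $\mathcal{U}_{i}^{\perp}$, not merely a subspace of it. Once this is secured, the invariance of the row space under the invertible $\bm{T}_{iz}$ and the additivity of row spaces under stacking are routine, with \cref{lem:sylvester_inequ} supplying only the dimensional accounting.
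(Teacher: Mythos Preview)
Your argument is correct and proves a sharper statement than the paper does: you show that $\mathrm{col}_{i\in\mathcal{N}_{s}}(\bm{T}_{i\alpha})$ and $\mathrm{obsv}(\bm{A},\mathrm{col}_{i\in\mathcal{N}_{s}}(\bm{C}_{i}))$ share the \emph{same row space}, from which the rank equality is immediate. The paper stays at the rank level throughout. It factors globally as $\mathrm{col}_{i\in\mathcal{N}_{s}}(\bm{T}_{i\alpha})=\mathrm{diag}_{i\in\mathcal{N}_{s}}(\bm{T}_{iz})\,\mathrm{col}_{i\in\mathcal{N}_{s}}(\bm{T}_{io})$, asserts (from the decomposition \eqref{eq:decom_matrix}) that $\mathrm{rank}[\mathrm{col}_{i\in\mathcal{N}_{s}}(\bm{T}_{io})]=\mathrm{rank}[\mathrm{obsv}(\bm{A},\mathrm{col}_{i\in\mathcal{N}_{s}}(\bm{C}_{i}))]$, and then applies \cref{lem:sylvester_inequ} to the product, using that the block-diagonal factor has full rank $\sum_{i\in\mathcal{N}_{s}}n_{i}$ so both Sylvester bounds collapse to $\mathrm{rank}[\mathrm{col}_{i\in\mathcal{N}_{s}}(\bm{T}_{io})]$. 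Your route handles the invertible $\bm{T}_{iz}$ node by node via row-space preservation and then sums row spaces, bypassing Sylvester entirely (your invocation of \cref{lem:sylvester_inequ} is cosmetic). What you gain is a cleaner, more geometric proof that also delivers the subspace identity $\mathrm{row}(\bm{T}_{i\alpha})=\mathcal{U}_{i}^{\perp}$; what the paper's approach buys is that it never needs to unpack the kernel identity $\ker(\bm{T}_{io})=\ker(\mathcal{O}_{i})$ explicitly, treating the stacked rank equality for $\bm{T}_{io}$ as a direct consequence of the canonical decomposition.
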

\begin{proof}
	From the definition of $\bm{T}_{i\alpha}$, we have $\mathrm{col}_{i\in\mathcal{N}_{s}}(\bm{T}_{i\alpha})=\mathrm{diag}_{i\in\mathcal{N}_{s}}(\bm{T}_{iz})\mathrm{col}_{i\in\mathcal{N}_{s}}(\bm{T}_{io}).$ Owing to the decomposition \eqref{eq:decom_matrix},
	\begin{align*}
		\mathrm{rank}[\mathrm{col}_{i\in\mathcal{N}_{s}}(\bm{T}_{io})]=\mathrm{rank}[\mathrm{obsv}(\bm{A},\mathrm{col}_{i\in\mathcal{N}_{s}}(\bm{C}_{i}))]
	\end{align*}
	From \cref{lem:sylvester_inequ} and the fact that $\bm{T}_{iz}$ is full rank, we have
	\begin{align*}
		&\mathrm{rank}[\mathrm{diag}_{i\in\mathcal{N}_{s}}(\bm{T}_{iz})]+\mathrm{rank}[\mathrm{col}_{i\in\mathcal{N}_{s}}(\bm{T}_{io})]-\sum_{i\in\mathcal{N}_{s}}n_{i} \\
		&\leq\mathrm{rank}[\mathrm{col}_{i\in\mathcal{N}_{s}}(\bm{T}_{i\alpha})]=\mathrm{rank}[\mathrm{diag}_{i\in\mathcal{N}_{s}}(\bm{T}_{iz})\mathrm{col}_{i\in\mathcal{N}_{s}}(\bm{T}_{io})] \\
		&\leq\min\{\mathrm{rank}[\mathrm{diag}_{i\in\mathcal{N}_{s}}(\bm{T}_{iz})],\mathrm{rank}[\mathrm{col}_{i\in\mathcal{N}_{s}}(\bm{T}_{io})]\} \\
		&\hspace{4.5cm}\Downarrow\ (a)\\
		&\mathrm{rank}[\mathrm{col}_{i\in\mathcal{N}_{s}}\!(\bm{T}_{io})]\!\leq\!\mathrm{rank}[\mathrm{col}_{i\in\mathcal{N}_{s}}\!(\bm{T}_{i\alpha})]\!\leq\!\mathrm{rank}[\mathrm{col}_{i\in\mathcal{N}_{s}}\!(\bm{T}_{io})]\\
		&\hspace{4.5cm}\Downarrow \\
		&\mathrm{rank}[\mathrm{col}_{i\in\mathcal{N}_{s}}(\bm{T}_{i\alpha})]
		=\mathrm{rank}[\mathrm{col}_{i\in\mathcal{N}_{s}}(\bm{T}_{io})]\\
		&\hspace{4cm}=\mathrm{rank}[\mathrm{obsv}(\bm{A},\mathrm{col}_{i\in\mathcal{N}_{s}}(\bm{C}_{i}))]
	\end{align*}
	where $(a)$ comes from the fact that
	\begin{align*}
		\max_{i\in\mathcal{N}_{s}}n_{i}\!\leq\!\mathrm{rank}[\mathrm{col}_{i\in\mathcal{N}_{s}}(\bm{T}_{io})]\!\leq\!\mathrm{rank}[\mathrm{diag}_{i\in\mathcal{N}_{s}}(\bm{T}_{iz})]\!=\!\sum_{i\in\mathcal{N}_{s}}\!n_{i}
	\end{align*}
	This completes the proof.
\end{proof}
\cref{pro:1} bridges the gap between traditional observability conditions based on the system matrices and the invertibility of the transformation matrix $\bm{T}_{i\alpha}$, which is instrumental for the following analysis.

\begin{lem}\label{rem:graph_path}
    \cite{bollobas2013modern} For a given directed graph $\mathcal{G}=\{\mathcal{N,E,A}\}$, $\mathcal{C}=[c_{ij}]\in\mathbb{R}^{N\times~N}=\mathcal{A}^{L}$ denotes the $L$th power of the adjacency matrix, then $c_{ij}$ is equal to the number of available paths from node $i$ to node $j$ in $L$ steps (across $L$ edges).
\end{lem}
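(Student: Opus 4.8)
The plan is to prove the statement by induction on the path length $L$, using the definition of matrix multiplication to relate length-$L$ paths to length-$(L-1)$ paths extended by a single edge. The entire content of the claim is an algebraic unfolding of the combinatorial meaning of $\mathcal{A}^L$, so the proof is essentially bookkeeping once the right decomposition of paths is identified.

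For the base case $L=1$ I would simply note that $\mathcal{C}=\mathcal{A}^1=\mathcal{A}$, so $c_{ij}=a_{ij}$. By the definition of the adjacency matrix given earlier in the paper, $a_{ij}=1$ exactly when $(i,j)\in\mathcal{E}$, i.e.\ when there is precisely one direct edge (a single-step path) from node $i$ to node $j$, and $a_{ij}=0$ otherwise; this matches the asserted count for $L=1$. For the inductive step, assume the claim holds for $L-1$, so that the $(i,k)$ entry of $\mathcal{A}^{L-1}$ equals the number of $(L-1)$-step paths from $i$ to $k$. Writing $\mathcal{A}^L=\mathcal{A}^{L-1}\mathcal{A}$ and expanding the product yields $\big(\mathcal{A}^L\big)_{ij}=\sum_{k=1}^{N}\big(\mathcal{A}^{L-1}\big)_{ik}\,a_{kj}$. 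The key observation is a bijective decomposition of paths: every $L$-step path from $i$ to $j$ is uniquely the concatenation of an $(L-1)$-step path from $i$ to some penultimate node $k$ with a terminal edge $(k,j)$. Consequently, each summand $\big(\mathcal{A}^{L-1}\big)_{ik}\,a_{kj}$ counts exactly those $L$-step paths from $i$ to $j$ whose second-to-last node is $k$, and summing over all $k\in\mathcal{N}$ counts every $L$-step path from $i$ to $j$, which establishes the claim at level $L$.

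The argument is almost entirely mechanical, and I do not anticipate a genuine obstacle. The only point that deserves a moment of care is making the path-decomposition precise, namely verifying that ``split off the last edge'' is a well-defined bijection between $L$-step walks from $i$ to $j$ and pairs consisting of an $(L-1)$-step walk from $i$ to $k$ together with an edge $(k,j)$. Since the notion of ``path'' inherited from the cited reference permits repeated vertices and edges (i.e.\ it is a walk), no complication arises from revisiting nodes, and the correspondence is clean. Thus the induction closes immediately.
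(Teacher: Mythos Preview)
Your inductive argument is correct and is exactly the standard textbook proof of this classical fact about adjacency-matrix powers. The paper itself does not supply a proof of this lemma at all---it simply quotes the result from the cited reference \cite{bollobas2013modern}---so there is nothing to compare against beyond noting that your argument is the one found in essentially any graph-theory text.
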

From \cref{rem:graph_path}, we define $\mathcal{D}^{L}=\mathrm{bool}\left(\sum_{i=1}^{L}\mathcal{A}^{i}\right)$ the $L$-step data flow matrix with $\mathrm{bool}(\cdot)$ the boolean function, and the nonzero elements in $\mathcal{D}^{L}(:,i)$ ($i$th column of $\mathcal{D}^{L}$) indicate the nodes available to node $i$ in $L$ steps. Let $\mathcal{N}^{L}_{i} \subseteq \mathcal {N}\setminus{i},\,\forall i \in \mathcal{N}$ be the $L$-step reachable set of node $i$ with $\mathcal{N}^{0}_{i}=\emptyset$. Note that $\mathcal{N}^{L}_{i}$ can be inferred from $\mathcal{D}^{L}$ by $\mathcal{N}^{L}_{i} = \{i|\mathcal{D}^{L}(:,i) \ne 0\}$.

Without considering the network delays, each subsystem $i$ is able to acquire $\bm{T}_{j\alpha}$ and up-to-date local state estimate $\bm{\hat{z}}_{j}$ from any other sensor node $j \in \mathcal{N}^{L}_{i}$ through cross-agent communication.
Next, we introduce a definition and a necessary assumption for the solvability of \cref{prob:1}.

\begin{dfn}[Complementary  Neighbouring (CN) set]\label{dfn:GN}
		Assume $\mathcal{CN}_{i} \subseteq \mathcal{N}^{L}_{i},\,\forall\,L$, it is said to be a CN set of node $i$ if the matrix $\left[
			\begin{array}{cc}
				\bm{T}_{i\alpha}  \\
				\mathrm{col}_{j\in\mathcal{CN}_{i}}(\bm{T}_{j\alpha})
			\end{array}
			\right] \in \mathbb{R}^{(\sum{n_j}+n_i)\times n}$ is full rank, i.e.,
		\begin{align}\label{eq:rankcon}
		    \mathrm{rank}\Bigg(\left[
			\begin{array}{cc}
				\bm{T}_{i\alpha}  \\
				\mathrm{col}_{j\in\mathcal{CN}_{i}}(\bm{T}_{j\alpha})
			\end{array}
			\right]\Bigg)=n\,.
		\end{align}
\end{dfn}

\begin{ass}\label{ass:CN_existence}
Each node $i \in \mathcal{N}$ of the system \eqref{eq:sys} has at least one CN set.
\end{ass}
As it can be noticed, the common assumptions of strongly connected graphs \cite{han2019simple,silm2019note,silm:tac20} are relaxed in this paper by \cref{ass:CN_existence}, which can hold in the absence of strong connectivity. \cref{ass:CN_existence} guarantees the existence of a CN set for each node. Nevertheless, without imposing further constraints, for any agent $i$, its CN set may not be unique and redundant information might be exchanged. In sequel, we show how to find a class of optimised CN sets $\bm{\mathcal{CN}}^{opt}=\{\mathcal{CN}^{opt}_{1},\cdots,\mathcal{CN}^{opt}_{N}\}$ in terms of the communication cost for the proposed distributed observer, and how to avoid redundant data exchange.  

\begin{ass}\label{ass:communicationcost}
The graph $\mathcal{G}$ modelling the communication network of the distributed system \eqref{eq:sys} has an equal communication cost across all edges.
\end{ass}

{Consider $h^{L}_{i}$ the $L$-step neighbouring set of node $i$ required for exchanging data (possibly cross-agent) with agent $i$, $\min \left|\mathcal{CN}^{opt}_{i}\right|$ is found according to
\begin{align}
    \begin{aligned}
\{h^{1}_{i},h^{2}_{i},\cdots,h^{P}_{i}\} = \argmin_{h^{L}_{i} \subseteq \mathcal{N}_i^L}\left|\mathcal{CN}^{opt}_{i}\right|, \,\,\text{such that}\\ \mathrm{rank}\Bigg(\left[
    \begin{array}{cc}
         \bm{T}_{i\alpha}  \\
    \mathrm{col}_{j\in\mathcal{CN}_{i}^{opt}}(\bm{T}_{j\alpha})
      \end{array}
    \right]\Bigg)\!=\!n 
    \end{aligned}\label{eq:communication_opt}
\end{align}
provided $P< |\mathcal{N}|$ the minimum step value to ensure the rank condition, such that
$$\mathrm{rank}\Bigg(\!\left[\!\!
    \begin{array}{cc}
         \bm{T}_{i\alpha}  \\
    \mathrm{col}_{j\in\mathcal{N}_{i}^{P}}(\bm{T}_{j\alpha})
      \end{array}
    \!\!\right]\!\Bigg)\!=\!n \!>\!\mathrm{rank}\Bigg(\!\left[\!\!
    \begin{array}{cc}
         \bm{T}_{i\alpha}  \\
    \mathrm{col}_{j\in\mathcal{N}_{i}^{P-1}}(\bm{T}_{j\alpha})
      \end{array}
    \!\!\right]\!\Bigg),$$
It is worth noting that the optimisation problem \eqref{eq:communication_opt} does not necessarily lead to the minimum $\left|\mathcal{CN}^{opt}_{i}\right|$ in a global sense as a smaller $\left|\mathcal{CN}^{opt}_{i}\right|$ may be obtained by searching up to a step value great than $P$. However, by constraining the outreach step value at $P$, it is beneficial for mitigating the impact of cross-agent communication delay, as will be discussed later on in \cref{subsec:withdelays}.
}
By \eqref{eq:communication_opt}, we provide the offline optimisation algorithm for the selection of a CN set as summarised in Algorithm~\ref{alg:1}. It is noteworthy that under \cref{ass:communicationcost}, the solution to the optimisation problem \eqref{eq:communication_opt} may not be unique unless additional constraints are imposed. Moreover, in case that $\sum_j n_j>n-n_i,\,j\in\mathcal{CN}_{i}^{opt}$, the matrix $\left[
    \begin{array}{cc}
         \bm{T}_{i\alpha}  \\
    \mathrm{col}_{j\in\mathcal{CN}_{i}^{opt}}(\bm{T}_{j\alpha})
      \end{array}
    \right]$ has more than $n$ rows, which implies information redundancy. In this context, Algorithm \ref{alg:1} also extracts $n-n_i$ rows 
    from $\mathrm{col}_{j\in\mathcal{CN}_{i}^{opt}}(\bm{T}_{j\alpha}) \in \mathbb{R}^{\sum_jn_j \times n}$, such that the resulting matrix
    $ \left[
    \begin{array}{cc}
         \bm{T}_{i\alpha}  \\
         \mathrm{col}_{j\in\mathcal{CN}_{i}^{opt}}(\bm{T}_{j\alpha}^*)
      \end{array}
    \right] \in \mathbb{R}^{n \times n}$ with $\bm{T}_{j\alpha}^*$ the extracted row elements from $\bm{T}_{j\alpha}$, 
    is rank of $n$.

\begin{algorithm}[!ht]
    \label{alg:1}
    \LinesNumbered
	\caption{Offline optimisation of the data acquisition scheme}
	\KwIn{system matrices $\bm{A}$ and $\bm{C}$; graph adjacency matrix $\mathcal{A}$; node number $N$}
	\KwOut{optimised CN sets $\bm{\mathcal{CN}}^{opt}$}
	\BlankLine
	\textbf{Initialisation:} iteration index $k=1,l=1$\;
	\While{$\bm{\mathcal{CN}}^{opt}$ is not obtained}{
	calculate $\mathcal{D}^{l}$\;
	\For{$k\leftarrow1$ \KwTo $N$}{
	\If{$\mathcal{CN}^{opt}_{k}$ is not obtained}{
	calculate $\mathcal{N}^{l}_{k}$ based on $\mathcal{D}^{l}$\;
	optimise $\mathcal{CN}^{opt}_{k}$ using \eqref{eq:communication_opt} and identify $\bm{T}_{j\alpha}^*$\;
	}}
	$l=l+1$\;
	}
\end{algorithm}

\begin{rem}
    Algorithm~\ref{alg:1} optimises the communication network under Assumption~3, which assumes a uniform weight across all edges. A more general framework can be modelled by utilising a weighted communication graph, where each edge is assigned a weight associated with an individual communication cost. This calls for an optimisation of the information exchange architecture {to minimise} the aggregated cost from $1$-step reachable set to $P$-step reachable set rather than the cost for the $P$th step only as in the present paper (see \eqref{eq:communication_opt}), provided $P$ the minimum step number to render the full rank condition \eqref{eq:rankcon}. A detailed discussion of this subject is beyond the scope of the present article, but it is envisaged to be done in future work.
\end{rem}

 As $\bm{T}_{j\alpha}^*,\,\forall j\in\mathcal{CN}_{i}^{opt}$ is determined offline, it is known to each node $i$ when the communication network is initialised. Furthermore, the data sets required by each node $i$ in real-time for global state observation is defined as 
    \begin{equation}\label{eq:exchangeddata}
       \mathcal{I}_{ij} = \{\bm{\hat z}_j^*\},\,\forall j \in \mathcal{CN}^{opt}_{i}      
       \end{equation}
 where $\bm{\hat z}_j^*$ the local estimate of the ${\bm z}_j^*$ that fulfils $ {\bm z}_j^*= \bm{T}_{j\alpha}^*\bm{x}$. 

In view of the linear relation \eqref{eq:zi2x}, each agent $i$ can obtain the full state vector by
\begin{equation}
		\bm{\hat{x}}_{i}=\left[
    \begin{array}{cc}
         \bm{T}_{i\alpha}  \\
         \mathrm{col}_{j\in\mathcal{CN}_{i}^{opt}}(\bm{T}_{j\alpha}^*)
      \end{array}
    \right]^{-1}\left[
    \begin{array}{cc}
         \bm{\hat z}_i  \\
         \mathrm{col}_{j\in\mathcal{CN}_{i}^{opt}}(\bm{\hat z}_j^*)
      \end{array}
    \right],\forall t>0
   \label{eq:thm1}
  \end{equation}
provided the data sets $\mathcal{I}_{ij},\,\forall j\in\mathcal{CN}^{opt}_{i}$ via communication. Hence, the fixed-time convergent condition \eqref{rq:mainproblem} can be achieved. However, in practice, due to the various delays consist in the network, \eqref{eq:thm1} will not work without further provisions, which will be provided in the next subsection.

\subsection{Delayed Case}
\label{subsec:withdelays}
We now have all the ingredients to propose our main algorithm for the practical case, where network delays exist. For the sake of further analysis, let $\tau_{ij}$ be the time-varying delay consists in gathering the information set $\mathcal{I}_{ij}$ from $j$.

\begin{ass}\label{ass:boundedDelay}
For any $i \in \mathcal{N}$, the accumulated delays is bounded, such that $\sum_j\tau_{ij} \leq \overline{\tau},\,\forall j\in \mathcal{CN}^{opt}_{i}$, with $\overline{\tau}$ a known positive constant.
\end{ass}

Under \cref{ass:boundedDelay}, we assume that all the sensor nodes and observers have synchronized clocks and include time-stamps in the date transmission~\cite{kruszewski2012switched}. As such, each node $i$ can identify at $t \geq \overline{\tau} $ a set of  $\mathcal{I}_{ij}(t-\overline{\tau}),\,\forall j\in \mathcal{CN}^{opt}_{i}$ with synchronised delay. Combined with $\bm{\hat z}_i(t-\overline{\tau})$ the distributed observer can be designed, as shown in the following theorem.

\begin{thm}[Distributed Fixed-Time Observer]\label{thm:2}
Under Assumptions \ref{ass:observability}, \ref{ass:CN_existence} and \ref{ass:boundedDelay}, given the distributed system \eqref{eq:sys}, the fixed-time estimation scheme \eqref{eq:obs_eq20} and the intercommunication mechanism determined by Algorithm~\ref{alg:1}, for each node $i\in \mathcal{N}$, the local state estimate $\bm{\hat{x}}_{i}(t), \forall t \geq \overline{\tau} +t_\delta$ obtained by 
	\begin{equation}
		\bm{\hat{x}}_{i}(t)\!=\!e^{\bm{A}\overline{\tau}}\left[\!\!
    \begin{array}{cc}
         \bm{T}_{i\alpha}  \\
         \mathrm{col}_{j\in\mathcal{CN}_{i}^{opt}}(\bm{T}_{j\alpha}^*)
      \end{array}
    \!\!\right]^{-1}
    \!\left[\!\!
    \begin{array}{cc}
        \bm{\hat z}_i(t-\overline{\tau})  \\
         \mathrm{col}_{j\in\mathcal{CN}_{i}^{opt}}(\bm{\hat z}_j^*(t-\overline{\tau}))
      \end{array}
    \!\!\right]
    \label{eq:thm2}
  \end{equation}
for all $t \geq \overline{\tau} +t_\delta$ is equal to ${\bm x}(t)$, such that the condition \eqref{rq:mainproblem} is fulfilled.
\end{thm}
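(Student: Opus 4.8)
The plan is to chain together three ingredients that are already in place by the time this theorem is stated: the exactness of the local fixed-time estimator past $t_\delta$, the invertibility of the stacked transformation guaranteed by the CN construction, and the autonomy of \eqref{eq:sys}, which makes forward prediction by the matrix exponential exact. None of these requires a new estimate; the work is entirely in aligning them.

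First I would record that the fixed-time estimator \eqref{eq:obs_eq20} is \emph{exact}, not merely asymptotic: since $\bm{\Gamma}_i$ is invertible for every $t\ge t_\delta$, we have $\hat{\bm z}_i(t)=\bm z_i(t)=\bm{T}_{i\alpha}\bm x(t)$ and, for the extracted rows, $\hat{\bm z}_j^*(t)=\bm{T}_{j\alpha}^*\bm x(t)$ for all $j\in\mathcal{CN}_i^{opt}$ and all $t\ge t_\delta$, using \eqref{eq:zi2x} and the offline construction of $\bm{T}_{j\alpha}^*$. I would then transfer this to the delayed signals: by \cref{ass:boundedDelay} together with the synchronised clocks and time-stamps, at any $t\ge\overline{\tau}$ node $i$ holds the collection $\{\hat{\bm z}_i(t-\overline{\tau}),\,\hat{\bm z}_j^*(t-\overline{\tau}):j\in\mathcal{CN}_i^{opt}\}$, all referenced to the single instant $t-\overline{\tau}$. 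For $t\ge\overline{\tau}+t_\delta$ we have $t-\overline{\tau}\ge t_\delta$, so each of these equals $\bm{T}_{i\alpha}\bm x(t-\overline{\tau})$ or $\bm{T}_{j\alpha}^*\bm x(t-\overline{\tau})$, respectively.

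Next I would invoke the full-rank CN condition. \cref{dfn:GN} with \cref{ass:CN_existence} and the row-extraction performed in \cref{alg:1} make the stacked matrix $\big[\bm{T}_{i\alpha}^\top,\ \mathrm{col}_{j\in\mathcal{CN}_i^{opt}}(\bm{T}_{j\alpha}^*)^\top\big]^\top\in\mathbb{R}^{n\times n}$ invertible, so premultiplying the stacked delayed estimates by its inverse recovers exactly $\bm x(t-\overline{\tau})$. Finally, because \eqref{eq:sys} is autonomous ($\dot{\bm x}=\bm{Ax}$, no input), the flow satisfies $\bm x(t)=e^{\bm A\overline{\tau}}\bm x(t-\overline{\tau})$ exactly; premultiplying the recovered state by $e^{\bm A\overline{\tau}}$ therefore yields $\bm x(t)$, which is precisely \eqref{eq:thm2}. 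Hence $|\hat{\bm x}_i(t)-\bm x(t)|=0$ for all $t\ge\overline{\tau}+t_\delta$, establishing \eqref{rq:mainproblem}.

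The crux is conceptual rather than algebraic. The argument works only because a \emph{single} transition matrix $e^{\bm A\overline{\tau}}$ can realign every stacked component to time $t$, which in turn requires all contributions to be synchronised to the common value $\overline{\tau}$ rather than the individual delays $\tau_{ij}$. The step I would scrutinise most is therefore the buffering-and-time-stamping mechanism under \cref{ass:boundedDelay}: one must confirm that holding each term back to the worst-case $\overline{\tau}$ — and using the node's own delayed estimate $\hat{\bm z}_i(t-\overline{\tau})$ rather than its current one — is exactly what renders the stacked vector consistent at the single instant $t-\overline{\tau}$. It is worth emphasising that the exactness of the prediction genuinely relies on the absence of process input and disturbance; this is the same idealisation that the robustness analysis of \cref{sec:4} must subsequently relax, so I would flag it clearly as the hinge of the delay-free-prediction step.
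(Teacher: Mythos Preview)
Your proposal is correct and follows essentially the same route as the paper's own proof: exactness of the local estimators for $t\ge t_\delta$ gives $\hat{\bm z}_i(t-\overline{\tau})=\bm z_i(t-\overline{\tau})$ and $\hat{\bm z}_j^*(t-\overline{\tau})=\bm z_j^*(t-\overline{\tau})$, the invertible stacked transformation from the CN construction recovers $\bm x(t-\overline{\tau})$, and the flow of the autonomous system propagates it forward via $e^{\bm A\overline{\tau}}$. If anything, you spell out the synchronisation and buffering step more carefully than the paper, which dispatches the whole argument in two lines.
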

\begin{proof}
Thanks to the finite-time local observers \eqref{eq:obs_eq20} that is activated at $t=t_\delta$ and the information sets $\mathcal{I}_{ij}$ received from the CN set, node $i$ is able to reconstruct delayed estimates
\begin{multline*}
    \left[
    \begin{array}{cc}
         \bm{\hat z}_i(t-\overline{\tau})  \\
         \mathrm{col}_{j\in\mathcal{CN}_{i}^{opt}}(\bm{\hat z}_j^*(t-\overline{\tau}))
      \end{array}
    \right] \!=\!  \left[
    \begin{array}{cc}
         {\bm z}_i(t-\overline{\tau})  \\
         \mathrm{col}_{j\in\mathcal{CN}_{i}^{opt}}({\bm z}_j^*(t-\overline{\tau}))
      \end{array}
    \right].
\end{multline*}
for all $t \geq \overline{\tau} +t_\delta$ is equal to ${\bm x}(t)$. Hence, from \eqref{eq:thm1} and \eqref{eq:sys}, it is immediate to show the following relationship by using \eqref{eq:thm2}
\begin{align}
   	\bm{\hat{x}}_{i}(t)=e^{\bm{A}\overline{\tau}}{\bm x}(t-\overline{\tau}) = \bm{x}(t),\,\forall t \geq \overline{\tau} + t_\delta \label{eq:thm1_proof}
\end{align}
which completes the proof.
\end{proof}

\begin{rem}\label{rem:1}
In contrast to the existing distributed observers (e.g., Luenberger-like observers) where agents only communicate with their neighbours (i.e., $\mathcal{N}_i$), the proposed method relies on a cross-agent communication strategy which enables an agent $i$ to communicate with $j \notin \mathcal{N}_i$. This feature enables the optimisation Algorithm~\ref{alg:1}, and the resulting data flow may turn out to be efficient and useful in practice to reduce communication load. More specifically, in the proposed distributed observer, the accumulated data flows into a node is of dimension $(n-n_{i})$, thus the dimension of the data flow through a communication channel (i.e., an edge, in one direction) is below $(n-n_{i})$. However, in Luenberger-like distributed observers~\cite{han2019simple}, the data transmitted along any edge is of dimension $n$, and each node has to manage to collect $n|\mathcal{N}_{i}|$-dimensional data. Despite the delay introduced by the cross-agent communication, {the influence of a bounded delay can be compensated using an open-loop prediction scheme.
}
\end{rem}




\begin{rem}
With the proposed cross-agent communication strategy, the proposed estimation scheme remains valid if the outputs $y_i$ are shared instead of the local state estimates $\bm{\hat z}_j$ (see \cref{pro:1} that builds the connection between conventional observability and the invertibility of the coordinate transformation from $\bm{x}$ to ${\bm z}$). Nevertheless, 
sharing the outputs directly may sacrifice \textit{privacy-preserving} properties of the method. Particularly, when one or more sensors/communication links are attacked, it could expose more sensor nodes to the attacker. For this reason, the state estimate sharing strategy is adopted in the proposed framework. 
Moreover, the cross-agent communication strategy may be applied to either Luenberger-type asymptotic \cite{han2019simple} or finite-time \cite{silm2019note} observers, which can also leads to reduced communication as discussed in \cref{rem:1}.
\end{rem}


\section{Robustness Analysis of the Observer}
\label{sec:4}
This section analyses the robustness of the proposed observer against measurement and process disturbances.
Assuming the presence of the bounded model uncertainty and sensor disturbance,  $\|\bm{d}_{x}\|_{\infty}\leq\overline{d}_{x},\|\bm{d}_{y,i}\|_{\infty}\leq\overline{d}_{y}$ in \eqref{eq:sys}, such that
\begin{align}\label{eq:noisy_sys}
    \bm{\dot{x}}_d = \bm{Ax}_d+\bm{d}_{x},\ \bm{y}_d = \bm{Cx}_d+\bm{d}_{y}
\end{align}
where $\bm{x}_d$ denotes state variable under the effect of $\bm d_x(t)$.
In this context, for $i$th subsystem it holds that 
$ 
    \bm{\dot{\bar x}} _{i,d}= \bm{\bar{A}}_i \bm{\bar  x}_{i,d}+\bm T_i\bm{d}_{x}, {y}_{i,d} = \bm{\bar{ C}}_i\bm{ \bar x}_{i,d}+{d}_{y,i}
$
where $\bm{{\bar x}}_{i,d}=[\bm{\bar x}_{io,d}^\top\quad \bm{\bar x}_{iu,d}^\top]^\top $. 
By analogy to \eqref{eq:obs_canonical}, the observable part follows
\begin{align}\label{eq:noisy_sysz} 
    \bm{\dot{ z}}_{i,d} = \bm{A}_{i,z}   \bm z_{i,d}+\bm{{d}}_{z,i}, \bm{y}_{i,d} = \bm{ C}_{i,z}  \bm z_{i,d}+{d}_{y,i},
\end{align}
where { $\bm{d}_{z,i}\triangleq {\bm T}_{i\alpha}\bm{d}_{x}=[d_{z,i,0},\cdots,d_{z,i,n_i-1}]^\top\in\mathbb{R}^{n_i}$ } and the disturbance-effected state variable satisfies the identity that
\begin{align}\label{eq:noisy_z}
\bm\gamma_{i,h}\bm z_{i,d}=\lambda_{i,h,d}
\end{align}
with 
\begin{align*}
    \lambda_{i,h,d}\!=&(-1)^{n_{i}-1}\!\!\left[\!V_{K_{i,h}^{(n_{i})}}\bm C_{i,z}\bm z_{i,d}\!\right]\!\!+\!\!\!\sum_{p=0}^{n_{i}-1}\!\!a_{p}(-1)^{p}\!\!\left[\!V_{K_{i,h}^{(p)}}\bm C_{i,z}\bm z_{i,d}\!\right]\\
&	+\sum_{p=0}^{n_{i}-1}(-1)^{p}\left[V_{K_{i,h}^{(p)}}{ d_{z,i,n_{i}-1-p}}\right].
\end{align*}
In the noisy environment, the state estimator \eqref{eq:obs_eq20} gives
\begin{align}\label{eq:noisy_obs_eq20}
	\bm{\hat{z}}_{i,d}=\bm{\Gamma}_i^{-1}\bm{\hat \Lambda}_{i,d},\forall t \geq t_\delta
\end{align}
where $\bm{\hat \Lambda}_{i,d}=\left[\hat \lambda_{i,0,d},\hat \lambda_{i,1,d},\cdots,\hat \lambda_{i,n_{i}-1,d}\right]^\top$ and 
\begin{align*}
    \hat \lambda_{i,h,d}=&(-1)^{n_{i}-1}\left[V_{K_{i,h}^{(n_{i})}}y_{i,d}\right]+\sum_{p=0}^{n_{i}-1}a_{p}(-1)^{p}\left[V_{K_{i,h}^{(p)}}y_{i,d}\right].
\end{align*}
Comparing \eqref{eq:noisy_z} and \eqref{eq:noisy_obs_eq20}, the estimation error of $\bm z_{i,d}$ takes on the form
\begin{align}\label{eq:noisy_sysz_error} 
    \bm{\epsilon} _{z,i}\triangleq \bm z_{i,h,d}-\bm{\hat z}_{i,h,d}=\bm \Gamma_i^{-1}\bm \epsilon_{\Lambda,i}
\end{align}
where $\bm\epsilon_{\Lambda,i}=\left[\epsilon_{\lambda_i,0},\epsilon_{\lambda_i,1},\cdots,\epsilon_{\lambda_i,n_{i}-1}\right]^\top$, and 
\begin{equation*}
\begin{array}{cll}
    \epsilon_{\lambda_i,h}&\triangleq&  \lambda_{i,h}-\hat \lambda_{i,h}
    =	\displaystyle\sum_{p=0}^{n_{i}-1}(-1)^{p}\left[V_{K_{i,h}^{(p)}} d_{z,i,n_{i}-1-p}\right]\\
	&&	-(-1)^{n_{i}-1}\!\!\left[V_{K_{i,h}^{(n_{i})}}d_{y,i}\right]\!\!-\!\!\displaystyle\sum_{p=0}^{n_{i}-1}a_{p}(-1)^{p}\!\!\left[V_{K_{i,h}^{(p)}}d_{y,i}\right].
	\end{array}
\end{equation*}
The effects of both measurement noise $d_{y,i}$ and model uncertainty $\bm{d}_{x}$ are embedded in $\bm{\epsilon}_{{\Lambda},i}$ in the form of Volterra images, i.e. $\left[V_{K_{i,h}^{(p)}}d_{y,i}\right]\triangleq\epsilon_{d_{y,i},p,h}$ and $\left[V_{K_{i,h}^{(p)}}d_{z,i,p}\right]\triangleq\epsilon_{d_{z,i},p,h}, p\in\{0,\cdots,n_{i}\},h\in\{0,\cdots,n_{i}-1\}$.
Recall the transformation of the Volterra operator, $\epsilon_{d_{y},p,h}$ is the output of the LTV system
\begin{align}\label{eq:V_Kdy}
	\dot{\epsilon}_{d_{y},p,h}=-\omega_{h}\epsilon_{d_{y},p,h}+K_{i,h}^{(p)}(t,t)d_{y}
\end{align}
Thanks to the Bounded-Input-Bounded-Output(BIBO) feature of \eqref{eq:V_Kdy}, effects of the measurement noise can be bounded by
\begin{align*}
	|\epsilon_{d_{y,i},p,h}|\leq\left|\frac{1}{\omega_{h}}\overline{d}_{y,i}\sup_{0<\tau\leq t}K_{i,h}^{(p)}(\tau,\tau)\right|\triangleq\overline{\epsilon}_{d_{y,i},p,h}
\end{align*}

In the same line of reasoning, the Volterra images of the model uncertainty $\bm{d}_{z,i}$ have the upper bound $\overline{\epsilon}_{d_{z,i},p,h}\triangleq{\left\|\bm T_{i\alpha}\right\|_\infty}\left|\frac{1}{\omega_{h}}\overline{d}_{x,i}\sup_{0<\tau\leq t}K_{i,h}^{(p)}(\tau,\tau)\right|$. Therefore, the overall upper bound of the state estimation error $\overline{\epsilon}_{\lambda_{i},h}$ for all $h\in\{0,1,\cdots,n_{i}-1\}$ can be written as
\begin{align*}
	|\epsilon_{\lambda_{i},h}|\leq\overline{\epsilon}_{d_{y,i},n,h}+\sum_{p=0}^{n_{i}-1}|a_{p}|\overline{\epsilon}_{d_{y,i},p,h}+\sum_{p=0}^{n_{i}-1}\overline{\epsilon}_{d_{z,i},p,h}\triangleq\overline{\epsilon}_{\lambda_{i},h}.
\end{align*}

As such, by stacking $\overline{\epsilon}_{\lambda_{i},h}$ induced by different kernels, one can obtain the vector bound as $\bm{\overline{\epsilon}}_{\Lambda,i}\triangleq\left[\overline{\epsilon}_{\lambda_{i},0},\cdots,\overline{\epsilon}_{\lambda_{i},n_i-1}\right]^{\top}$. Consequently, the observation error defined in \eqref{eq:noisy_sysz_error} is bounded by $
   \bm{\overline{\epsilon}}_{{z}_{i}}\leq\|\bm{\Gamma}_i^{-1}\|_\infty\bm{\overline{\epsilon}}_{{\Lambda},i}$.

Taking the communication delay into account, the compensation in \eqref{eq:thm2} writes
\begin{equation}\label{eq:hatx_disturbance}
    	\bm{\hat{x}}_{i,d}(t)\!=\!e^{\bm{A}\overline{\tau}}\left[\!\!\!
    \begin{array}{cc}
         \bm{T}_{i\alpha}  \\
         \mathrm{col}_{j\in\mathcal{CN}_{i}^{opt}}(\bm{T}_{j\alpha}^*)
      \end{array}
    \!\!\!\right]^{-1}\!
    \left[\!\!\!
    \begin{array}{cc}
         \bm{\hat z}_{i,d}(t-\overline{\tau})  \\
         \mathrm{col}_{j\in\mathcal{CN}_{i}^{opt}}(\bm{\hat z}_{j,d}^*(t-\overline{\tau}))
      \end{array}
   \! \!\!\right]
\end{equation}
However, recalling \eqref{eq:noisy_sys}, during the delay $\overline\tau$, $\bm d_x$ introduce extra effects that can be expressed as
\begin{align}
   \bm\epsilon_{dx,\overline\tau}= \displaystyle\int_{t-\overline\tau}^te^{\bm A(t-\tau)}\bm{d}_{x}(\tau)d\tau
\end{align}
Being $\bm A$ Hurwitz, it is straightforward to conclude that $ \epsilon_{dx,\overline\tau}$ is bounded with an upper bound $\overline \epsilon_{dx,\overline\tau}\geq\left\|\displaystyle\int_{t-\overline\tau}^t e^{\bm A(t-\tau)}{\bm d}_x(\tau)d\tau\right\|_{\infty}$.

Notably, for any $i\in\mathcal{N}$ in \eqref{eq:thm1}, $\bm{\Gamma}_{i}$ and $\bm{T}_{i\alpha}$ are not affected by model uncertainty and disturbances. Therefore, the distributed observation of $\bm x_{i,d}$ in \eqref{eq:hatx_disturbance} remains bounded as long as $\bm{d}_{x},{d}_{y}$ are bounded, i.e., for all $t \geq \overline{\tau} +t_\delta$,
\begin{align}
	\begin{aligned}
		&\left|\bm{\epsilon}_{{\hat{x}}_{i}}\right|\leq\left[\!\begin{array}{cc}
         \bm{T}_{i\alpha}  \\
         \mathrm{col}_{j\in\mathcal{CN}_{i}^{opt}}(\bm{T}_{j\alpha}^*)
      \end{array}\!\right]^{-1}\!
      \left[\!
        \begin{array}{cc}
            \bm{\overline{\epsilon}}_{{z}_{i}}  \\
            \mathrm{col}_{j\in\mathcal{CN}_{i}^{opt}}(\bm{\overline{\epsilon}}_{{z}^{*}_{j}})
        \end{array}
       \! \!\right]+\overline \epsilon_{dx,\overline\tau},
	\end{aligned}
	\label{eq:robust_noise}
\end{align}

\section{Numerical Examples}
\label{sec:5}
In this section, the effectiveness of the proposed distributed observer is examined by a few numerical examples. Consider a linear system \cite{han2019simple} of order $n=6$ with four local sensors, i.e., $N=4$ and $[n_1\ n_2\ n_3\ n_4]=[2\ 5\ 1\ 5]$. System parameters are given as follows
\begin{align*}
    \bm{A}=\begin{bmatrix}
    -1 & 0 & 0 & 0 & 0 & 0 \\
    -1 & 1 & 1 & 0 & 0 & 0 \\
    1 & -2 & -1 & -1 & 1 & 1 \\
    0 & 0 & 0 & -1 & 0 & 0 \\
    -8 & 1 & -1 & -1 & -2 & 0 \\
    4 & -0.5 & 0.5 & 0 & 0 & -4 \\
    \end{bmatrix}\\
    \bm{C}=\left[\begin{array}{cccccc}
    1 & 0 & 0 & 2 & 0 & 0 \\
    2 & 0 & 0 & 1 & 0 & 0 \\
    \hdashline
    2 & 0 & 1 & 0 & 0 & 1 \\
    \hdashline
    0 & 0 & 0 & 2 & 0 & 0 \\
    \hdashline
    1 & 0 & 2 & 0 & 0 & 0 \\
    2 & 0 & 4 & 0 & 0 & 0 \\
    \end{array}\right]=\begin{bmatrix}
    \bm{C}_{1} \\ \bm{C}_{2} \\ \bm{C}_{3} \\ \bm{C}_{4}
    \end{bmatrix}
\end{align*}
with the communication network described by
\begin{align}\label{eq:exampleA}
	\mathcal{A}=\begin{bmatrix}
		0 & 0 & 1 & 1\\
		1 & 0 & 1 & 0 \\
		0 & 1 & 0 & 0 \\
		1 & 0 & 0 & 0 \\
	\end{bmatrix}\,.
\end{align}
\begin{figure}[!htb]
	\centering
	\includegraphics[width=\columnwidth]{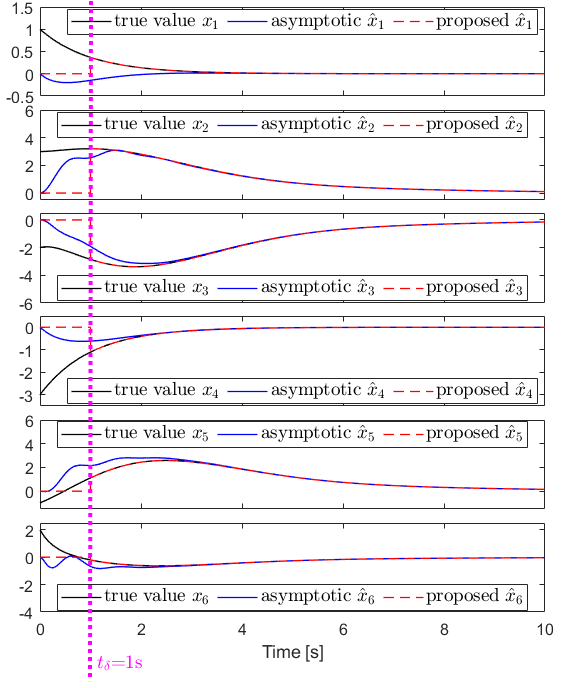}\\[-2.5ex]
	\caption{State estimates of method \cite{han2019simple} and the proposed method in the delay-free and noise-free scenario.}
	\label{fig:state_delay_free_noise_free}
\end{figure}
From \eqref{eq:exampleA} and Algorithm~\ref{alg:1}, it is straightforward to obtain $\mathcal{CN}^{opt}_{1}=\{2\},\,\mathcal{CN}^{opt}_{2}=\{3\},\,\mathcal{CN}^{opt}_{3}=\{2\},\,\mathcal{CN}^{opt}_{4}=\{1\}$.
It is noteworthy that the redundant communication links which are $\{4\}$ in $\mathcal{CN}_{1}$ and $\{1\}$ in $\mathcal{CN}_{2}$ are removed at the design stage by Algorithm~\ref{alg:1}, thereby reducing the data transfer required by a distributed observer. Moreover, taking the subsystem $1$ as an example, it does not require the full observable states of the subsystem $2$ owing to the information redundancy $n_1+n_2>n$ as discussed in \cref{subsec:withoutdelay}.

In the first instance, the communication delay is neglected throughout the network, and both process and measurement noises are not taken into account. The simulation results show that the state estimates of all agents can reach consensus immediately after the activation time $t_\delta=1$s. For benchmarking purposes, the estimation results of the proposed method is compared with a recently proposed Luenberger-like  approach~\cite{han2019simple}.
Taking the $1$st subsystem as an example, the comparative results are plotted in \cref{fig:state_delay_free_noise_free}. As it can be seen, all estimates of the proposed method converge to the actual state within a fixed time, showing a much faster convergence speed than the method in \cite{han2019simple}.
\vspace{-2ex}
\begin{figure}[!htb]
	\centering
	\includegraphics[width=\columnwidth]{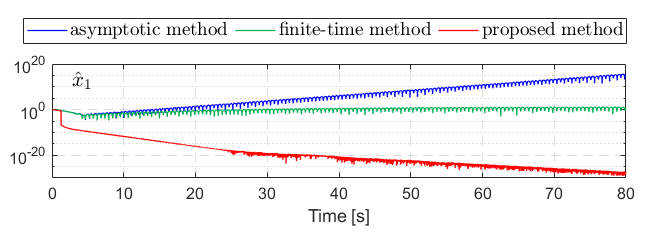}\\[-2.5ex]
	\caption{State estimation errors of methods \cite{han2019simple,silm2019note} and the proposed method in the delayed and noise-free scenario.}
	\label{fig:error_delayed_noise_free}
\end{figure}
Next, a uniform delay is added to each network edge with an upper bound $\overline{\tau}=0.27$s. In addition to \cite{han2019simple}, we further compare the proposed method with a finite-time distributed observer that has been shown robust against communication delays~\cite{silm2019note}.
The errors of the state estimates are given in \cref{fig:error_delayed_noise_free}. Under the delayed network, { the asymptotic method~\cite{han2019simple}} shows the non-convergent performance, while the error of the finite-time observer~\cite{silm2019note} stays bounded. However, the proposed method demonstrates its advantage in terms of dealing with network delays by showing the most accurate state estimation.
\begin{figure}[!htb]
	\centering
	\includegraphics[width=\columnwidth]{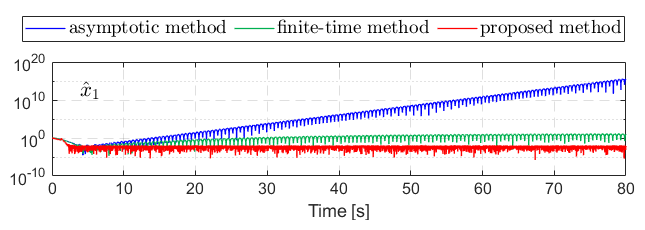}\\[-2.5ex]
	\caption{State estimation errors of methods \cite{han2019simple,silm2019note} and the proposed method in the delayed and noisy scenario.}
	\label{fig:error_delayed_noisy}
\end{figure}

\begin{figure}[!htb]
	\centering
	\includegraphics[width=\columnwidth]{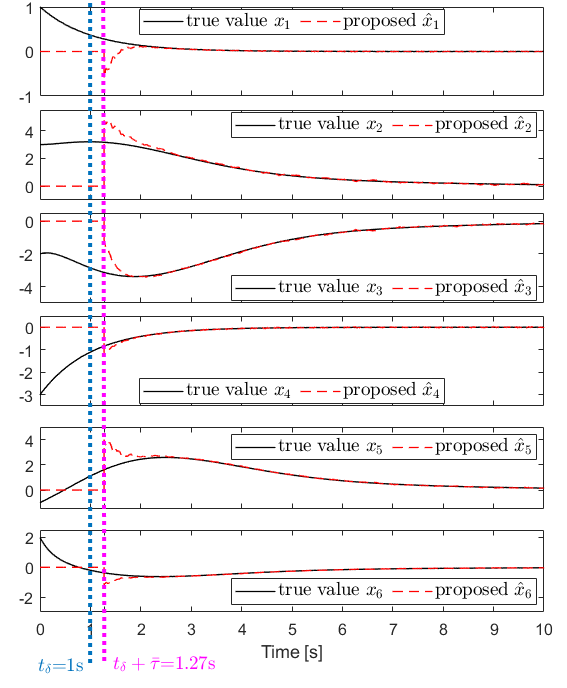}\\[-2.8ex]
	\caption{State estimates of the proposed method in the delayed and noisy scenario.}
	\label{fig:state_delayed_noisy}
\end{figure}

Finally, a noisy scenario is simulated where the outputs are corrupted by a uniformly distributed random noise within $\left[-0.2\quad0.2\right]$ and the system dynamic are perturbed by a sinusoidal uncertainty $0.1\sin(50t)$. Under the effects of both disturbances and the same delay considered in the previous example, the estimation error of all three methods are compared in \cref{fig:error_delayed_noisy}, where the proposed method 
outperforms the other two in terms of steady-state accuracy. From the state estimates shown in \cref{fig:state_delayed_noisy}, the proposed method converges within a fixed time $t_{\delta}+\bar{\tau}=1.27$s. This arises from that once the proposed observer is activated at $t_{\delta}=1$s, it requires at most $\bar{\tau}$ to transmitting neighbouring information ensuring the fully observable in each subsystem.

\section{Conclusion}
\label{sec:6}
A fixed-time convergent observer is proposed for distributed state estimation of a large scale system with directed communication typologies. The fast convergence properties enable the data transmission delay to be compensated a posteriori. As such, cross-agent communication is utilised and it yields a more effective data exchange mechanism with an optimised (minimised) data flow. The boundedness of the estimation error has been confirmed subject to bounded measurement and process disturbances. Numerical examples and comparisons with the existing method have been shown to verify the effectiveness of the proposed method. Future research efforts will be devoted to including a time-varying communication graph, event-triggered communication, as well as to consider more general nonlinear systems and cyber security issues.


%





\bibliographystyle{IEEEtran}
\bibliography{reference}

\begin{thebibliography}{10}
\providecommand{\url}[1]{#1}
\csname url@samestyle\endcsname
\providecommand{\newblock}{\relax}
\providecommand{\bibinfo}[2]{#2}
\providecommand{\BIBentrySTDinterwordspacing}{\spaceskip=0pt\relax}
\providecommand{\BIBentryALTinterwordstretchfactor}{4}
\providecommand{\BIBentryALTinterwordspacing}{\spaceskip=\fontdimen2\font plus
\BIBentryALTinterwordstretchfactor\fontdimen3\font minus
  \fontdimen4\font\relax}
\providecommand{\BIBforeignlanguage}[2]{{%
\expandafter\ifx\csname l@#1\endcsname\relax
\typeout{** WARNING: IEEEtran.bst: No hyphenation pattern has been}%
\typeout{** loaded for the language `#1'. Using the pattern for}%
\typeout{** the default language instead.}%
\else
\language=\csname l@#1\endcsname
\fi
#2}}
\providecommand{\BIBdecl}{\relax}
\BIBdecl

\bibitem{liu2008controllability}
B.~Liu, T.~Chu, L.~Wang, and G.~Xie, ``Controllability of a leader--follower
  dynamic network with switching topology,'' \emph{IEEE Transactions on
  Automatic Control}, vol.~53, no.~4, pp. 1009--1013, 2008.

\bibitem{Zhang:tcst19}
X.~Zhang, K.~Hengster-Movric, M.~Sebek, W.~Desmet, and C.~Faria, ``Distributed
  observer and controller design for spatially interconnected systems,''
  \emph{IEEE Transactions on Control Systems Technology}, vol.~27, no.~1, pp.
  1--13, 2019.

\bibitem{zou2020moving}
L.~Zou, Z.~Wang, J.~Hu, and D.~Zhou, ``Moving horizon estimation with unknown
  inputs under dynamic quantization effects,'' \emph{IEEE Transactions on
  Automatic Control}, vol.~65, no.~12, pp. 5368--5375, 2020.

\bibitem{farina2010distributed}
M.~Farina, G.~Ferrari-Trecate, and R.~Scattolini, ``Distributed moving horizon
  estimation for linear constrained systems,'' \emph{IEEE Transactions on
  Automatic Control}, vol.~55, no.~11, pp. 2462--2475, 2010.

\bibitem{wang2020distributed}
L.~Wang, D.~Fullmer, F.~Liu, and A.~S. Morse, ``Distributed control of linear
  multi-channel systems: summary of results,'' in \emph{2020 American Control
  Conference (ACC)}.\hskip 1em plus 0.5em minus 0.4em\relax IEEE, 2020, pp.
  4576--4581.

\bibitem{park2012necessary}
S.~Park and N.~C. Martins, ``Necessary and sufficient conditions for the
  stabilizability of a class of lti distributed observers,'' in \emph{2012 IEEE
  51st IEEE Conference on Decision and Control (CDC)}.\hskip 1em plus 0.5em
  minus 0.4em\relax IEEE, 2012, pp. 7431--7436.

\bibitem{Knotek:tcst21}
S.~Knotek, K.~Hengster-Movric, and M.~Sebek, ``Distributed estimation on sensor
  networks with measurement uncertainties,'' \emph{IEEE Transactions on Control
  Systems Technology}, vol.~29, no.~5, pp. 1997--2011, 2021.

\bibitem{olfati2007distributed}
R.~Olfati-Saber, ``Distributed kalman filtering for sensor networks,'' in
  \emph{2007 46th IEEE Conference on Decision and Control}.\hskip 1em plus
  0.5em minus 0.4em\relax IEEE, 2007, pp. 5492--5498.

\bibitem{kim2016distributed}
T.~Kim, H.~Shim, and D.~D. Cho, ``Distributed luenberger observer design,'' in
  \emph{2016 IEEE 55th Conference on Decision and Control (CDC)}.\hskip 1em
  plus 0.5em minus 0.4em\relax IEEE, 2016, pp. 6928--6933.

\bibitem{han2019simple}
W.~Han, H.~L. Trentelman, Z.~Wang, and Y.~Shen, ``A simple approach to
  distributed observer design for linear systems,'' \emph{IEEE Transactions on
  Automatic Control}, vol.~64, no.~1, pp. 329--336, 2019.

\bibitem{park2017design}
S.~Park and N.~C. Martins, ``Design of distributed lti observers for state
  omniscience,'' \emph{IEEE Transactions on Automatic Control}, vol.~62, no.~2,
  pp. 561--576, 2017.

\bibitem{Mitra:tac19}
A.~Mitra and S.~Sundaram, ``Distributed observers for lti systems,'' \emph{IEEE
  Transactions on Automatic Control}, vol.~63, no.~11, pp. 1024--1029, 2018.

\bibitem{wang:tac18}
L.~Wang and A.~S. Morse, ``A distributed observer for a time-invariant linear
  system,'' \emph{IEEE Transactions on Automatic Control}, vol.~63, no.~7, pp.
  2123--2130, 2018.

\bibitem{ugrinovskii2013conditions}
V.~Ugrinovskii, ``Conditions for detectability in distributed consensus-based
  observer networks,'' \emph{IEEE Transactions on Automatic Control}, vol.~58,
  no.~10, pp. 2659--2664, 2013.

\bibitem{khan2010connectivity}
U.~A. Khan, S.~Kar, A.~Jadbabaie, and J.~M. Moura, ``On connectivity,
  observability, and stability in distributed estimation,'' in \emph{49th IEEE
  conference on decision and control (CDC)}.\hskip 1em plus 0.5em minus
  0.4em\relax IEEE, 2010, pp. 6639--6644.

\bibitem{del2019distributed}
{\'A}.~R. del Nozal, P.~Mill{\'a}n, L.~Orihuela, A.~Seuret, and L.~Zaccarian,
  ``Distributed estimation based on multi-hop subspace decomposition,''
  \emph{Automatica}, vol.~99, pp. 213--220, 2019.

\bibitem{silm2019note}
H.~Silm, R.~Ushirobira, D.~Efimov, J.-P. Richard, and W.~Michiels, ``A note on
  distributed finite-time observers,'' \emph{IEEE Transactions on Automatic
  Control}, vol.~64, no.~2, pp. 759--766, 2019.

\bibitem{silm2019design}
H.~Silm, R.~Ushirobira, D.~Efimov, W.~Michiels, and J.-P. Richard, ``Design of
  a distributed finite-time observer using observability decompositions,'' in
  \emph{2019 18th European Control Conference (ECC)}.\hskip 1em plus 0.5em
  minus 0.4em\relax IEEE, 2019, pp. 1816--1821.

\bibitem{Perruquetti:tac08}
W.~Perruquetti, T.~Floquet, and E.~Moulay, ``Finite-time observers: Application
  to secure communication,'' \emph{IEEE Transactions on Automatic Control},
  vol.~53, no.~1, pp. 356--360, 2008.

\bibitem{silm:tac20}
H.~Silm, R.~Ushirobira, D.~Efimov, E.~Fridman, J.-P. Richard, and W.~Michiels,
  ``Distributed observers with time-varying delays,'' \emph{IEEE Transactions
  on Automatic Control}, pp. 1--1, 2020.

\bibitem{pin2013kernel}
G.~Pin, M.~Lovera, A.~Assalone, and T.~Parisini, ``Kernel-based non-asymptotic
  state estimation for linear continuous-time systems,'' in \emph{2013 american
  control conference}.\hskip 1em plus 0.5em minus 0.4em\relax IEEE, 2013, pp.
  3123--3128.

\bibitem{pin2016non}
G.~Pin, A.~Assalone, M.~Lovera, and T.~Parisini, ``Non-asymptotic kernel-based
  parametric estimation of continuous-time linear systems,'' \emph{IEEE
  Transactions on Automatic Control}, vol.~61, no.~2, pp. 360--373, 2016.

\bibitem{willems1998introduction}
J.~C. Willems and J.~W. Polderman, \emph{Introduction to mathematical systems
  theory: a behavioral approach}.\hskip 1em plus 0.5em minus 0.4em\relax
  Springer Science \& Business Media, 1998, vol.~26.

\bibitem{pin2019robust}
G.~Pin, B.~Chen, and T.~Parisini, ``Robust deadbeat continuous-time observer
  design based on modulation integrals,'' \emph{Automatica}, vol. 107, pp.
  95--102, 2019.

\bibitem{li2020kernel}
P.~Li, F.~Boem, G.~Pin, and T.~Parisini, ``Kernel-based simultaneous
  parameter-state estimation for continuous-time systems,'' \emph{IEEE
  Transactions on Automatic Control}, vol.~65, no.~7, pp. 3053--3059, 2020.

\bibitem{horn2012matrix}
R.~A. Horn and C.~R. Johnson, \emph{Matrix analysis}.\hskip 1em plus 0.5em
  minus 0.4em\relax Cambridge university press, 2012.

\bibitem{bollobas2013modern}
B.~Bollob{\'a}s, \emph{Modern graph theory}.\hskip 1em plus 0.5em minus
  0.4em\relax Springer Science \& Business Media, 2013, vol. 184.

\bibitem{kruszewski2012switched}
A.~Kruszewski, W.-J. Jiang, E.~Fridman, J.~P. Richard, and A.~Toguyeni, ``A
  switched system approach to exponential stabilization through communication
  network,'' \emph{IEEE transactions on control systems technology}, vol.~20,
  no.~4, pp. 887--900, 2012.

\end{thebibliography}

\end{document}